\renewcommand{\paragraph}{%
	\@startsection{paragraph}{4}%
	{\z@}{1.75ex \@plus 1ex \@minus .2ex}{-0.7em}%
	{\normalfont\normalsize\bfseries}%
}
\let\originalleft\left
\let\originalright\right
\renewcommand{\left}{\mathopen{}\mathclose\bgroup\originalleft}
\renewcommand{\right}{\aftergroup\egroup\originalright}
\pgfplotsset{compat=1.10}
\setlist[enumerate,1]{label=(\arabic*)}
\setlist[itemize,1]{label=--}
\setlist[itemize,2]{label=--}
\setlist[itemize,3]{label=--}
\setlist[itemize,4]{label=--}
\theoremstyle{definition}
\newtheorem{theorem}{Theorem}
\newtheorem{proposition}{Proposition}
\newtheorem{remark}{Remark}
\newtheorem{definition}{Definition}
\newtheoremstyle{named}
	{\topsep}					
	{\topsep}					
	{}							
	{0pt}						
	{\bfseries}					
	{}							
	{5pt plus 1pt minus 1pt}	
	{\thmnote{#3}}				
\theoremstyle{named}
\newtheorem{namedthm}{}
\crefname{page}{p.}{pp.}
\crefname{equation}{equation}{equations}
\crefname{section}{section}{sections}
\crefname{subsection}{section}{sections}
\crefname{subsubsection}{section}{sections}
\crefname{appsec}{appendix}{appendices}
\crefname{supplsec}{supplemental appendix}{supplemental appendices}
\crefname{footnote}{footnote}{footnotes}
\crefname{figure}{figure}{figures}
\crefname{table}{table}{tables}
\crefname{theorem}{theorem}{theorems}
\crefname{proposition}{proposition}{propositions}
\crefname{lemma}{lemma}{lemmata}
\crefname{corollary}{corollary}{corollaries}
\crefname{remark}{remark}{remarks}
\crefname{observation}{observation}{observations}
\crefname{example}{example}{examples}
\crefname{fact}{fact}{facts}
\crefname{definition}{definition}{definitions}
\crefname{assumption}{assumption}{assumptions}
\crefname{exercise}{exercise}{exercises}
\crefname{notation}{notation}{notation}
\crefname{claim}{claim}{claims}
\crefname{conjecture}{conjecture}{conjectures}
\DeclareMathOperator*{\argmax}{arg\,max}
\DeclareMathOperator*{\co}{co}
\newcommand{\R}{\mathbf{R}}
\newcommand*{\xslant}[2][76]{%
	\begingroup
	\sbox0{#2}%
	\pgfmathsetlengthmacro\wdslant{\the\wd0 + cos(#1)*\the\wd0}%
	\leavevmode
	\hbox to \wdslant{\hss
		\tikz[
			baseline=(X.base),
			inner sep=0pt,
			transform canvas={xslant=cos(#1)},
		] \node (X) {\usebox0};%
		\hss
		\vrule width 0pt height\ht0 depth\dp0 %
	}%
	\endgroup
}
\newcommand*{\xslantmath}{}
\def\xslantmath#1#{%
	\@xslantmath{#1}%
}
\newcommand*{\@xslantmath}[2]{%
	\ensuremath{%
		\mathpalette{\@@xslantmath{#1}}{#2}%
	}%
}
\newcommand*{\@@xslantmath}[3]{%
	\xslant#1{$#2#3\m@th$}%
}
\def\namedlabel#1#2{\begingroup
	#2%
	\def\@currentlabel{#2}%
	\phantomsection\label{#1}\endgroup
}
\let\save@mathaccent\mathaccent
\newcommand*\if@single[3]{%
	\setbox0\hbox{${\mathaccent"0362{#1}}^H$}%
	\setbox2\hbox{${\mathaccent"0362{\kern0pt#1}}^H$}%
	\ifdim\ht0=\ht2 #3\else #2\fi
	}
\newcommand*\rel@kern[1]{\kern#1\dimexpr\macc@kerna}
\newcommand*\widebar[1]{\@ifnextchar^{{\wide@bar{#1}{0}}}{\wide@bar{#1}{1}}}
\newcommand*\wide@bar[2]{\if@single{#1}{\wide@bar@{#1}{#2}{1}}{\wide@bar@{#1}{#2}{2}}}
\newcommand*\wide@bar@[3]{%
	\begingroup
	\def\mathaccent##1##2{%
	  \let\mathaccent\save@mathaccent
	  \if#32 \let\macc@nucleus\first@char \fi
	  \setbox\z@\hbox{$\macc@style{\macc@nucleus}_{}$}%
	  \setbox\tw@\hbox{$\macc@style{\macc@nucleus}{}_{}$}%
	  \dimen@\wd\tw@
	  \advance\dimen@-\wd\z@
	  \divide\dimen@ 3
	  \@tempdima\wd\tw@
	  \advance\@tempdima-\scriptspace
	  \divide\@tempdima 10
	  \advance\dimen@-\@tempdima
	  \ifdim\dimen@>\z@ \dimen@0pt\fi
	  \rel@kern{0.6}\kern-\dimen@
	  \if#31
	    \overline{\rel@kern{-0.6}\kern\dimen@\macc@nucleus\rel@kern{0.4}\kern\dimen@}%
	    \advance\dimen@0.4\dimexpr\macc@kerna
	    \let\final@kern#2%
	    \ifdim\dimen@<\z@ \let\final@kern1\fi
	    \if\final@kern1 \kern-\dimen@\fi
	  \else
	    \overline{\rel@kern{-0.6}\kern\dimen@#1}%
	  \fi
	}%
	\macc@depth\@ne
	\let\math@bgroup\@empty \let\math@egroup\macc@set@skewchar
	\mathsurround\z@ \frozen@everymath{\mathgroup\macc@group\relax}%
	\macc@set@skewchar\relax
	\let\mathaccentV\macc@nested@a
	\if#31
	  \macc@nested@a\relax111{#1}%
	\else
	  \def\gobble@till@marker##1\endmarker{}%
	  \futurelet\first@char\gobble@till@marker#1\endmarker
	  \ifcat\noexpand\first@char A\else
	    \def\first@char{}%
	  \fi
	  \macc@nested@a\relax111{\first@char}%
	\fi
	\endgroup
}
\title{\scshape Comparative risk attitude and the aggregation of single-crossing\thanks{Curello acknowledges support
from the German Research Foundation (DFG) through CRC TR 224 (Project B02).}}
\author{%
Gregorio Curello \\
Mannheim
\and
Ludvig Sinander \\
Oxford
\and
Mark Whitmeyer \\
Arizona State}
\date{2 December 2025}
\begin{document}

\maketitle

\begin{abstract}
In choice under risk, there is a standard notion of `less risk-averse than', due to \textcite{Yaari1969}. In the theory of comparative statics, the single-crossing property is satisfied by all weighted averages of a family of single-crossing functions if and only if the family satisfies a property called signed-ratio monotonicity \parencite{QuahStrulovici2012}. We establish a close link between `less risk-averse than' and signed-ratio monotonicity.
\end{abstract}

\section{Background: comparative risk attitude}
\label{sec:pratt}

For any non-empty finite set $X$, let $\Delta(X)$ be the set of all lotteries over $X$, i.e. all functions $p : X \to [0,1]$ such that $\sum_{x \in X} p(x) = 1$.

\begin{definition}[\cite{Yaari1969}]
	\label{definition:yaari}
	Let $X$ be a non-empty finite set, and fix functions $u,v : X \to \R$. We say that $u$ is \emph{less risk-averse than} $v$ if and only if for each $y \in X$ and each $p \in \Delta(X)$, $u(y) \geq \mathrel{(>)} \sum_{x \in X} u(x) p(x)$ implies $v(y) \geq \mathrel{(>)} \sum_{x \in X} v(x) p(x)$.
\end{definition}

Write `$\co A$' for the convex hull of a set $A \subseteq \R$, and `$\phi(Y)$' for the image of a function $\phi : Y \to \R$. Recall Pratt's (\citeyear{Pratt1964}) theorem:

\begin{namedthm}[Pratt's theorem.]
	\label{theorem:pratt}
	For a non-empty set $X$ and functions $u,v : X \to \R$, the following are equivalent:

	\begin{enumerate}[label=(\Alph*)]
	
		\item \label{item:pratt:lra} $u$ is less risk-averse than $v$.

		\item \label{item:pratt:trans} There exists an increasing convex function $\phi : \co(v(X)) \to \R$ that is strictly increasing on $v(X)$ and has $u(x) = \phi(v(x))$ for every $x \in X$.

		\item \label{item:pratt:curv} The following two properties hold:
		\begin{enumerate}[label=(\Roman*),topsep=0em]
		
			\item \label{item:pratt:curv:ordequiv} For any $x,y \in X$, $u(x) \geq \mathrel{(>)} u(y)$ implies $v(x) \geq \mathrel{(>)} v(y)$.

			\item \label{item:pratt:curv:compress} For any $x,y,z \in X$, if $u(x) < u(y) < u(z)$, then
			\begin{equation*}
				\frac{u(z)-u(y)}{u(y)-u(x)}
				\geq \frac{v(z)-v(y)}{v(y)-v(x)} .
			\end{equation*}
		
		\end{enumerate}
	
	\end{enumerate}
\end{namedthm}

\begin{remark}
	\label{remark:generalX}
	In the literature, comparative risk-aversion and its characterisation are almost only ever considered in the special case in which alternatives are monetary prizes: $X \subseteq \R$. However, by inspection, properties \ref{item:pratt:lra}--\ref{item:pratt:curv} are equally meaningful whatever the nature of the alternatives, and in fact they are equivalent even outside of the monetary-prizes case, as asserted above. See \textcite{CurelloSinanderWhitmeyer2025} for a proof.
\end{remark}

\section{Background: aggregation of single-crossing}
\label{sec:qs}

Abbreviate `partially ordered set' to `poset'.

\begin{definition}
	\label{definition:sc}
	Given a poset $(\Theta,\mathord{\lesssim})$, a function $\phi : \Theta \to \R$ is \emph{single-crossing} if and only if for any $\theta \lesssim \theta'$ in $\Theta$, $\phi(\theta) \geq \mathrel{(>)} 0$ implies $\phi(\theta') \geq \mathrel{(>)} 0$.
\end{definition}

This single-crossing property plays a central role in the theory of comparative statics. In particular, for a decision-maker with constraint set $X \subseteq \R$ and payoff function $U : X \times \Theta \to \R$, comparative-statics conclusions may be drawn about her choices $\theta \mapsto \argmax_{x \in X} U(x,\theta)$ if $U$ has \emph{single-crossing differences,} meaning precisely that for all $x<y$ in $X$, the function $U(y,\cdot) - U(x,\cdot)$ is single-crossing \parencite[see][]{MilgromShannon1994}.

We next consider \emph{families} of functions $\Theta \to \R$, i.e. subsets of $\R^\Theta$.

\begin{definition}[\cite{QuahStrulovici2012}]
	\label{definition:srm}
	Given a poset $(\Theta,\mathord{\lesssim})$, a family $\Phi$ of functions $\Theta \to \R$ satisfies \emph{signed-ratio monotonicity} if and only if for any $\phi,\psi \in \Phi$ and any $\theta \lesssim \theta'$ in $\Theta$, $\phi(\theta) < 0 < \psi(\theta)$ implies $-\phi(\theta)\psi(\theta') \geq -\phi(\theta')\psi(\theta)$.
\end{definition}

Obviously if the family $\Phi$ contains only single-crossing functions, then $\psi(\theta)>0$ implies $\psi(\theta')>0$, in which case the inequality $-\phi(\theta)\psi(\theta') \geq -\phi(\theta')\psi(\theta)$ may equivalently be written as $-\phi(\theta)/\psi(\theta) \geq -\phi(\theta')/\psi(\theta')$.

The following theorem is due to \textcite{QuahStrulovici2012}.%
	\footnote{Their statement of the theorem contains a typo (in particular, \ref{item:qs:sc_srm}\ref{item:qs:sc} is missing).}

\begin{namedthm}[Aggregation theorem.]
	\label{theorem:qs}
	For a non-empty poset $(\Theta,\mathord{\lesssim})$, a non-empty finite set $X$ and a function $f : X \times \Theta \to \R$, the following are equivalent:

	\begin{enumerate}[label=(\Alph*)]

		\item \label{item:qs:agg} For each $p \in \Delta(X)$, the map $\theta \mapsto \sum_{x \in X} f(x,\theta) p(x)$ is single-crossing.
	
		\item \label{item:qs:sc_srm} The following two properties hold:
		\begin{enumerate}[label=(\Roman*),topsep=0em]
		
			\item \label{item:qs:sc} For each $x \in X$, $f(x,\cdot)$ is single-crossing.

			\item \label{item:qs:srm} The family $\{ f(x,\cdot) : x \in X \}$ satisfies signed-ratio monotonicity.
		
		\end{enumerate}
	
	\end{enumerate}
\end{namedthm}

\section{The result}
\label{sec:equiv}

\begin{proposition}
	\label{proposition:equiv}
	For a non-empty poset $(\Theta,\mathord{\lesssim})$, a non-empty finite set $X$ and a function $U : X \times \Theta \to \R$, the following are equivalent:

	\begin{enumerate}[label=(\alph*)]

		\item \label{item:equiv:lra} For all $\theta \lesssim \theta'$ in $\Theta$, $U(\cdot,\theta)$ is less risk-averse than $U(\cdot,\theta')$.

		\item \label{item:equiv:qs} The following two properties hold:
		\begin{enumerate}[label=(\roman*),topsep=0em]
		
			\item \label{item:equiv:sc} For all $x,y \in X$, $U(y,\cdot) - U(x,\cdot)$ is single-crossing.

			\item \label{item:equiv:srm} For each $y \in X$, the family $\{ U(y,\cdot) - U(x,\cdot) : x \in X \}$ satisfies signed-ratio monotonicity.
		
		\end{enumerate}
	
	\end{enumerate}
\end{proposition}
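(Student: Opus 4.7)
The plan is to deduce (a) $\iff$ (b) as a direct corollary of the Aggregation theorem. The key observation is that, once Yaari's definition is read jointly across all comparable pairs $\theta \lesssim \theta'$, condition (a) becomes the assertion that certain convex combinations of the differences $U(y,\cdot) - U(x,\cdot)$ are single-crossing.

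Concretely, unpacking \cref{definition:yaari} with $u = U(\cdot, \theta)$ and $v = U(\cdot, \theta')$ and rearranging each pair of inequalities, condition (a) is equivalent to the following: for every $y \in X$ and every $p \in \Delta(X)$, the map
\[
    \theta \mapsto \sum_{x \in X} \bigl[ U(y, \theta) - U(x, \theta) \bigr] p(x)
\]
is single-crossing on $(\Theta, \lesssim)$. Now fix $y \in X$ and apply the Aggregation theorem to the function $f_y : X \times \Theta \to \R$ defined by $f_y(x, \theta) \coloneqq U(y, \theta) - U(x, \theta)$. Since the displayed statement is precisely property (A) for $f_y$, its equivalence with (B) yields that $U(y, \cdot) - U(x, \cdot)$ is single-crossing for each $x \in X$ and that the family $\{ U(y, \cdot) - U(x, \cdot) : x \in X \}$ satisfies signed-ratio monotonicity.

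Quantifying these two statements over all $y \in X$ yields exactly conditions (i) and (ii) of (b), respectively, completing the equivalence. There is no real obstacle; the whole proof is a short translation, and the only step requiring thought is recognising the reformulation of (a) given above---namely, that less risk-aversion of $U(\cdot, \theta)$ relative to $U(\cdot, \theta')$, aggregated across all $\theta \lesssim \theta'$, is a single-crossing statement about a one-parameter family of convex combinations indexed by $(y,p)$. Notably, Pratt's theorem---despite its prominence in \cref{sec:pratt}---is not needed for this deduction.
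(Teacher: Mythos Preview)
Your proposal is correct and is essentially identical to the paper's Proof~1 (via the aggregation theorem): you make the same reformulation of (a) as a single-crossing statement for $\theta \mapsto \sum_{x} [U(y,\theta)-U(x,\theta)]p(x)$ and then invoke the aggregation theorem for each fixed $y$. The paper also records a second proof via Pratt's theorem, so your closing remark that Pratt is unnecessary is accurate but not something the authors overlooked.
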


\begin{proof}[Proof~1 (via the aggregation theorem)]
   \ref{item:equiv:lra} holds if and only if for all $\theta \lesssim \theta'$ in $\Theta$, each $y \in X$ and each $p \in \Delta(X)$, $U(y,\theta) - \sum_{x \in X} U(x,\theta) p(x) \geq \mathrel{(>)} 0$ implies $U(y,\theta') - \sum_{x \in X} U(x,\theta') p(x) \geq \mathrel{(>)} 0$. This is equivalent to: for each $y \in X$ and each $p \in \Delta(X)$, the map $\theta \mapsto \sum_{x \in X} [U(y,\theta)-U(x,\theta)] p(x)$ is single-crossing. By the \hyperref[theorem:qs]{aggregation theorem}, that is equivalent to \ref{item:equiv:qs}.
\end{proof}

\begin{proof}[Proof~2 (via Pratt's theorem)]
	\ref{item:equiv:qs}\ref{item:equiv:sc} holds if and only if for all $\theta \lesssim \theta'$ in $\Theta$ and all $x,y \in X$, $U(y,\theta) \geq \mathrel{(>)} U(x,\theta)$ implies $U(y,\theta') \geq \mathrel{(>)} U(x,\theta')$. This is equivalent to: for all $\theta \lesssim \theta'$ in $\Theta$, $u(\cdot) \coloneqq U(\cdot,\theta)$ and $v(\cdot) \coloneqq U(\cdot,\theta')$ satisfy property \ref{item:pratt:curv}\ref{item:pratt:curv:ordequiv} in \hyperref[theorem:pratt]{Pratt's theorem}.
    Furthermore, if \ref{item:equiv:qs}\ref{item:equiv:sc} is satisfied, then \ref{item:equiv:qs}\ref{item:equiv:srm} holds if and only if for all $\theta \lesssim \theta'$ in $\Theta$ and all $x,y,z \in X$, $U(y,\theta) - U(z,\theta) < 0 < U(y,\theta) - U(x,\theta)$ implies
    \begin{equation*}
        \frac{ U(z,\theta) - U(y,\theta) }
        { U(y,\theta) - U(x,\theta) }
        \geq \frac{ U(z,\theta') - U(y,\theta') }
        { U(y,\theta') - U(x,\theta') } .
    \end{equation*}
    In other words, if \ref{item:equiv:qs}\ref{item:equiv:sc} is satisfied, then \ref{item:equiv:qs}\ref{item:equiv:srm} holds if and only if for all $\theta \lesssim \theta'$ in $\Theta$, $u(\cdot) \coloneqq U(\cdot,\theta)$ and $v(\cdot) \coloneqq U(\cdot,\theta')$ satisfy property \ref{item:pratt:curv}\ref{item:pratt:curv:compress} in \hyperref[theorem:pratt]{Pratt's theorem}.
    Hence by \hyperref[theorem:pratt]{Pratt's theorem}, \ref{item:equiv:qs} is equivalent to \ref{item:equiv:lra}.
\end{proof}



\printbibliography[heading=bibintoc]

\end{document}